\def\ps@headings{%
\def\@oddhead{\mbox{}\scriptsize\rightmark \hfil \thepage}%
\def\@evenhead{\scriptsize\thepage \hfil\leftmark\mbox{}}%
\def\@oddfoot{}%
\def\@evenfoot{}}
\newcommand{\N}{\mathcal{N}}
\newtheorem{theorem}{\textbf{Theorem}}
\newtheorem{lemma}{\textbf{Lemma}}
\newcommand{\nix}[1]{}
\begin{document}
\title{Distributed  Flooding-based Storage Algorithms \\for Large-scale  Sensor Networks}
\author{ Salah A. Aly$^\dag$~~~~~ Moustafa Youssef$^\ddag$~~~~~Hager S. Darwish$^\dag$~~~~~Mahmoud Zidan$^\S$ \\ $^\dag$Department of Computer \& Information Science, Cairo University,  Giza,  EG 12613 \\
$^\ddag$Wireless Intelligent Networks Center (WINC), Nile University, Smart Village, EG 51034 \\
$^\S$Faculty of Computers \& Information, Ain Shams University, Cairo, EG
11566}

  \maketitle

\begin{abstract}
In this paper we propose distributed storage algorithms for
large-scale wireless sensor networks. Assume a wireless sensor
network with $n$  nodes that have limited power, memory, and
bandwidth. Each node is capable of both sensing and storing data.
Such sensor nodes might disappear from the network due to failures
or battery depletion. Hence it is desired to design efficient
schemes to collect data from these $n$ nodes. We propose two
distributed storage algorithms (DSA's)  that utilize network
flooding to solve this problem. In the first algorithm, DSA-I, we
assume that every node utilizes network flooding to disseminate its data throughout the network using a mixing time of approximately $O(n)$. We show that this algorithm is efficient in terms of
the encoding and decoding operations.   In the second
algorithm, DSA-II, we assume that the total number of nodes is not
known to every sensor; hence dissemination of the data does not
depend on $n$.  The encoding operations in this case take
$O(C\mu^2)$, where $\mu$ is the mean degree of the network graph and
$C$ is a system parameter. We evaluate the performance of the
proposed algorithms through analysis and simulation, and show that
their performance matches the  derived
theoretical results.
\end{abstract}
\section{Introduction}\label{sec:intro}
Wireless sensor networks consist of small devices (nodes)  with
limited CPU, bandwidth, and power. They can be deployed in isolated,
tragedy, and obscured fields to monitor objects, detect fires,
temperature, flood, and other disaster incidents. They can also be
used in areas difficult to reach or where it is danger for a human
being to be involved. There has been extensive research work  on
sensor networks to improve their services, power, and
operations~\cite{stojmenovic05}.  They have taken much attention
recently due to their varieties of  applications.

Assume a wireless sensor network $\N$ with $n$ nodes thrown in a
field to detect fires or to measure temperatures. Those sensors
are distributed randomly and cannot maintain routing tables or
network topology. Some nodes might disappear from the network due to
failure or battery depletion. One needs to design storage strategies
to collect sensed data from those sensors before they disappear
suddenly from the network. Such problem and their solutions have
been considered in~\cite{aly08h,aly08e,lin07a,kamra06}.

Distributed network storage codes such as Fountain codes have been used along with random walks to distribute data from a set of sources $k$ to a set of storage nodes $n \gg k$, see~\cite{dimakis06b,aly08e}.
The authors  in~\cite{aly08e,aly08h} studied a model for  distributed network storage algorithms for wireless sensor networks where $k$ sensor nodes (sources) want to disseminate their data to $n$ storage nodes with minimum computational complexity. Fountain codes and random walks in graphs are used to solve this problem, in case of the total number of sensor and storage nodes may or may not be known. In this paper we assume a  model  where all $n$ nodes in $\N$ can sense and store data. Each sensor has a buffer of total size $M$. Furthermore, every sensor can divide its buffer into $m$ slots (small buffers), each of  size $c$, i.e. $m=\lfloor M/c \rfloor$.

\medskip

 In this paper we propose a distinct model for a wireless sensor network, wherein all nodes serve as sensors/sources as well as storage/receiver nodes.
The main advantages of the proposed algorithms are as follows:
\begin{compactenum}[i)]
\item Using analysis and simulation, we show that the encoding operations, of a node to disseminate its data,  take less computational time in comparison to the previous work.
\item
One does not need to query all nodes in the network in order to retrieve information about all $n$ nodes. Only $\%20-\%30 $ of the total nodes can be queried.
\item One can query only one arbitrary node $u$ in a certain region in the network to obtain an information about this region.
\end{compactenum}

\nix{
This paper is organized as follows. In Section~\ref{sec:model} we
introduce the network model. In Sections~\ref{sec:algDSA-I,sec:analysis} we propose two storage algorithms and provide their analysis. In Section~\ref{sec:simulation} we present simulation studies of the proposed algorithms, and the paper is concluded in Section~\ref{sec:conclusion}.}

\section{Network Model and Assumptions}\label{sec:model}

In this section we present the  network model and problem definition. Consider a wireless sensor network $\N$ with $n$ sensor nodes that are uniformly distributed at random in a region $\mathcal{A}=[0,L]^2$ for some integer $L\geq 1$. The network model $\N$  can be presented by a graph $G=(V,E)$ with  a set of  nodes $V$ and a set of edges $E$. The set $V$ represents the sensors $S=\{s_1,s_2,\ldots,s_n\}$ that will measure information about a specific field. Also, $E$ represents a set of connections (links) between the sensors $S$. Two arbitrary sensors $s_i$ and $s_j$ are connected if they are in each other's transmission range.

We ensure that the network is dense, meaning with high probability
there are no isolated nodes. Let $r>0$ be a fraction. We say that
two nodes $u$ and $v$ in $V$ are connected in $G$ if and only if the
distance between them is bounded by the design parameter $r$, i.e.
$0<d(u,v) \leq r$.

\begin{figure}
\begin{center}
\includegraphics[scale=0.7]{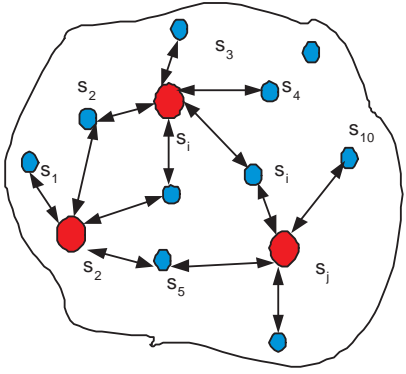}
\caption{A WSN with $n$ nodes arbitrary and randomly distributed in a field. A node $s_i$ determines its degree $d(s_i)$ by sending a flooding message to the neighboring nodes.}
\label{fig:wsn1}
\end{center}
\end{figure}

Given $u,v\in V$, we say $u$ and $v$ are \emph{adjacent}
(or $u$ is adjacent to $v$, and vice versa) if there exists a link between $u$
and $v$, i.e., $(u,v)\in E$. In this case, we also say that $u$ and $v$ are
\emph{neighbors}. Denote by $\mathcal{N}(u)$ the set of neighbors of a node
$u$. The number of neighbors, with a direct connection, of a node $u$ is called the \emph{node degree} of
$u$, and denoted by $d(u)$, i.e., $|\mathcal{N}(u)|=d(u)$. The \emph{mean
degree} of a graph $G$ is  given by
\begin{equation}\label{eq:mu}
\mu = \frac{1}{|V|}\sum_{u\in G}d(u),
\end{equation}
where $|V|$ is the total number of nodes in $G$.


The Ideal Soliton distribution $\Omega_{is}(d)$ for $k$ source blocks is given by~\cite{luby02}
\begin{equation}\label{eq:Ideal-Soliton-distribution}
\Omega_{is}(i)=\Pr(d=i)=\left\{ \begin{array}{ll} \vspace{+.05in} \displaystyle \frac{1}{k}, & i=1\\
\displaystyle\frac{1}{i(i-1)}, & i=2,3,...,k.\end{array}\right.
\end{equation}
We will use this probability distribution in the algorithms developed in the next section.

\subsection{Assumptions}

We have the following assumptions about the network model $\N$:
\begin{compactenum}[i)]
\item Let $S=\{s_1,\ldots,s_n\}$ be a set of sensing nodes that are
    distributed randomly and uniformly in a field. Each sensor acts as both a sensing and storage node. Thus, this assumption differentiate between our work
    and the problems considered in~\cite{aly08e,lin07a}.
\item Every node does not maintain routing or geographic tables, and the network topology is not known.  Every node $s_i$ can send
    a flooding message to the neighboring nodes. Also,
    every node $s_i$ can detect the total number of neighbors by broadcasting a
    simple  query message, and whoever replies to this message
    will be a neighbor of this node. Therefore, our work is more general
    and different from the work done in~\cite{dimakis07,dimakis05}. The
    degree $d(u)$ of this node is the total number of neighbors with a
    direct connection.

\begin{figure}
\begin{center}
\scalebox{0.7}{\includegraphics[width=8cm,height=3.5cm]{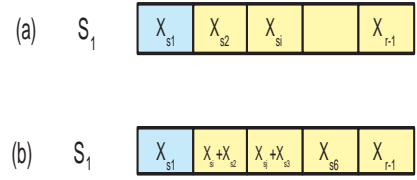}}
\caption{Every node $s_i$ has a buffer of size $M$ that  is divided into $m$ small slots. The node $s_i$ decides  with a certain probability whether to accept or reject a data $x_{s_j}$ and where to save it in one of its buffers.}
\label{fig:wsn1}
\end{center}
\end{figure}
\item  Every node has  a buffer of size $M$ and this buffer can be divided
    into smaller slots, each of size $c$, such that $m=\lfloor M/c
    \rfloor$. Hence, all nodes have the same number of slots. Also, the first slot of a node $u$ is reserved for its own sensing data.

\item Every node $s_i$ prepares a packet $packet_{s_i}$ with its $ID_{s_i}$,
    sensed data $x_{s_i}$, counter $c(x_{s_i})$, and a flag that is set to zero or one.

\item Every node draws  a degree $d_u$ from a degree distribution
    $\Omega_{is}$. If a node decided to accept a packet, it will also decide on
    which buffer it will be stored.

\end{compactenum}

\section{Distributed Storage Algorithms}\label{sec:algDSA-I}
In this section we will present a networked distributed storage algorithm
for wireless sensor networks, where all nodes act as sensing and storage nodes, and study its encoding and decoding operations.

\subsection{Encoding Operations}
We present a distributed storage  algorithm (DSA-I) for wireless
sensor networks. DSA-I algorithm consists of three main phases: Initialization,
encoding/flooding, and storage phases. Each phase can be described as
follows.

\begin{algorithm}[t!]
\SetLine%
\KwIn{A sensor network with $S=\{s_1,\ldots,s_n\}$ source nodes,  $n$ source packets $x_{s_i},\ldots,x_{s_n}$ and a positive constant $c(s_i)$.}%
\KwOut{storage buffers $y_1,y_2,\ldots,y_n$ for all sensors $S$.} %
\ForEach{node $u=1:n$}%
    {
    Generate $d_c(u)$ according to $\Omega_{is}(d)$ (or $\Omega_{rs}(d)$ and a set of neighbors $\N(u)$ using flooding.;%
    }%
\ForEach{source node $s_i, i=1:n$}%
    {
    Generate header of $x_{s_i}$ and $token=0$\;%
    Set counter $c(x_{s_i})=\lfloor n/d(s_i)\rfloor$\;%
   Flood $x_{s_i}$ to all $\mathcal{N}(s_i)$ uniformly at random, Send $x_{s_i}$ to $u \in \mathcal{N}(s_i)$ \;%

 with probability 1,  $y_u$ = $y_u\oplus x_{s_i}$\;%
 Put $x_{s_i}$ into $u$'s forward queue\;%
    $c(x_{s_i})=c(x_{s_i})-1$\;%
    }%

\While{source packets remaining}%
    {
    \ForEach{node $u$ receives packets before current round}%
        {
        Choose $v\in \mathcal{N}(u)$ uniformly at random\;%
        Send  packet $x_{s_i}$ in $u$'s forward queue to $v$\;%
        \uIf{$v$ receives $x_{s_i}$ for the first time}%
            {
            coin = rand(1)\;
                {flip a coin to accept or reject a packet} \;

            \If{$\mbox{coin}\leq \frac{1}{d_c(v)}$}%
                {
                $y_v$ = $y_v\oplus x_{s_i}$\;%
                Put $x_{s_i}$ into $v$'s forward queue\;%
            $c(x_{s_i})=c(x_{s_i})-1$%
                }%
            }%
            \uElseIf{$c(x_{s_i})\geq 1$}%
                {
                Put $x_{s_i}$ into $v$'s forward queue\;%
                $c(x_{s_i})=c(x_{s_i})-1$\;%
                }
            \Else
                {
                Discard $x_{s_i}$\;
                Hence $C(s_i)=1$ or no node to send to.
                }%
            }%
        }%

\mbox{}\\

\caption{ DSA-I Algorithm: Distributed storage algorithm  for a WSN where
the data is disseminated using multicasting and flooding to all
neighbors.} \label{alg:DSA-I}
\end{algorithm}

\subsubsection{Initialization Phase} Every node $s_i$ in $S$ has an
    $ID_{s_i}$ and sensed data $x_{s_i}$. The node $s_i$ in the
    initialization phase prepares a $packet_{s_i}$ with these values.  Also, the packet contains a hop count field, $c(x_{s_i})$, and a flag indicating whether the data is new or an update of a previous value.
    Each node will have a different hop count value depending on the number of its neighbors $d(s_i)$. Such that if a node $s_i$ has a few
    neighbors, then $c(x_{s_i})$ will be large. Also, a node with
    large number of neighbors will choose a small counter
    $c(x_{s_i})$. This means that every node will decide its own
    counter.
\begin{eqnarray}
packet_{s_i}=(ID_{s_i}, x_{s_i},c(x_{s_i}), flag)
\end{eqnarray}

The node $s_i$ broadcasts this packet to all neighboring nodes $\N(s_i)$.

\subsubsection{Encoding and Flooding Phase}
\begin{itemize}
\item After the flooding phase, every node $u$ receiving the
    $packet_{s_i}$ will check $ID_{s_i}$, accept the data $x_{s_i}$ with probability
    one, and will add this data to its buffer slots $y$.
\begin{eqnarray}
y_u^+=y_u^- \oplus x_{s_i}.
\end{eqnarray}
This is because the node $u$ is a direct neighbor of $s_i$. The data $x_{s_i}$ is disseminated rapidly to all neighbors of $s_{i}$.

\item The node $u$ will decrease the counter by one as
    \begin{eqnarray}c(x_{s_i})=c(x_{s_i})-1.\end{eqnarray}
The node $u$ will select a set of neighbors that did not
    receiver the message $x_{s_i}$ and it will unicast this message to them.
\item For an arbitrary node $v$ that receives the message from
    $u$, it will check if the $x_{s_i}$ has been received before,
    if yes, then it will discard it. If not, then it will decide whether to accept or reject
    it based on a random value drawn from $\Omega_{is}(d)$ . If
    accepted, then it will add the data to one of its buffer slots $
    y_v^+=y_v^- \oplus x_{s_i} $ and will decrease the counter
    $c(x_{s_i})=c(x_{s_i})-1.$

\item The node $v$ will check if the counter is zero, otherwise it will decrease it and send this message to the neighboring nodes that did not receive it.

\end{itemize}

\subsubsection{Storage Phase} Every node will maintain its own
buffer
    by storing a copy of its data and other nodes' data. Also, a node
    will store a list of nodes ID's of the packets that reached it.
    After all nodes receive, send, and store their own and
    neighboring data. Therefore, each node will have some information about itself and other nodes in the network.

\subsection{Decoding Operations}

The stored data can be recovered by querying a number of nodes from
the network. Let $n$ be the total number of alive nodes; assume that
every node has $m$ buffer slots such that $m=\lfloor M/c \rfloor$,
where $c$ is  a small buffer size, and $M$ is the total buffer size
in a node . In the next section, we show that  the data collector
needs to query at least $(1+\epsilon)n/m$ nodes in order to retrieve
the information about the $n$ variables.  This is much better than
previous approaches~\cite{aly08h,aly08e,lin07a} that require
querying large number of sources.

\section{DSA-I Analysis}\label{sec:analysis}
We shall provide analysis for the DSA-I algorithm shown in the previous section. The main idea is to utilize flooding and the node degree of each node to disseminate the sensed data from sensors throughout the network. We note that nodes with large degree will have  smaller counters in their packets such that their packets will travel for minimal number of neighbors. Also, nodes with smaller degree will have larger counters such that their packets will be disseminated to many neighbors as possible.
The following lemma establishes  the number of hobs (steps) that every packet will travel in the network.
\begin{lemma}\label{lem:onepacket}
On average, with a high probability,  the total number of steps for
one packet originated by a node $u$ in one branch in DSA-I is
$O(n/\mu)$.

\end{lemma}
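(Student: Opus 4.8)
The plan is to bound the branch length by the packet's hop counter, which the initialization phase fixes at $c(x_u)=\lfloor n/d(u)\rfloor$ for a source node $u$. The key structural observation is that along any single branch of the dissemination---that is, along one root-to-leaf path of the flooding tree rooted at $u$---the counter carried with the packet copy is decremented by exactly one at every hop, regardless of whether the receiving node accepts the packet into a buffer or merely forwards it; both the accept branch and the forward-only branch in Algorithm~\ref{alg:DSA-I} execute $c(x_{s_i})=c(x_{s_i})-1$. The packet is discarded once the counter reaches zero (or the branch dead-ends earlier for lack of a fresh neighbor). Hence the number of steps in one branch is deterministically at most $\lfloor n/d(u)\rfloor \le n/d(u)$, and the whole lemma reduces to replacing $d(u)$ by the mean degree $\mu$.

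To do this I would exploit the random geometric graph model of Section~\ref{sec:model}: since the $n$ nodes are placed uniformly and independently in $\mathcal{A}=[0,L]^2$ and two nodes are adjacent iff their distance is at most $r$, the degree $d(u)$ is a sum of $n-1$ independent Bernoulli indicators, each with success probability $p\approx \pi r^2/L^2$, so $\mathbb{E}[d(u)]=(n-1)p$ coincides, up to boundary effects, with the mean degree $\mu$ of~\eqref{eq:mu}. A Chernoff bound then gives $\Pr\big[\,d(u)\le(1-\delta)\mu\,\big]\le e^{-\delta^2\mu/2}$ for any fixed $\delta\in(0,1)$, and a union bound over the $n$ nodes shows that, provided $\mu=\Omega(\log n)$ (the usual density regime in which the network has no isolated nodes), every node simultaneously satisfies $d(u)\ge(1-\delta)\mu$ with high probability. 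On this event the branch length obeys $n/d(u)\le n/((1-\delta)\mu)=O(n/\mu)$, which proves the claim for every node at once, and averaging over the $n$ sources yields the same $O(n/\mu)$ bound on the mean branch length.

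The main obstacle I anticipate is the concentration step rather than the counting step. Because the statement mixes an ``on average'' claim with a ``with high probability'' claim, one must be careful not to conflate $\mathbb{E}[1/d(u)]$ with $1/\mu$; by Jensen's inequality $\mathbb{E}[1/d(u)]\ge 1/\mu$ in general, so the averaged bound genuinely relies on the same degree concentration rather than following from linearity of expectation alone. The two technical points needing care are therefore ensuring $\mu$ grows fast enough with $n$ for the Chernoff failure probability to survive the union bound, and controlling the boundary nodes of $\mathcal{A}$, whose expected degree is a constant factor below the interior value; both affect only the hidden constant and leave the $O(n/\mu)$ order intact.
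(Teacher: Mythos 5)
Your proposal is correct and follows the same skeleton as the paper's own proof: bound the branch length by the initial hop counter $c(x_u)=\lfloor n/d(u)\rfloor$, which is decremented at every forwarding step, and then replace $d(u)$ by the mean degree $\mu$. The only difference is rigor, in your favor: where the paper simply asserts that the degree of an arbitrary node can be approximated by $\mu$ (``approximating the mean degree of the graph to the degree of any arbitrary node $u$, the result follows''), you justify that replacement with a Chernoff bound plus union bound in the random geometric graph model under the density assumption $\mu=\Omega(\log n)$, and you correctly flag the Jensen-inequality pitfall showing that the ``on average'' claim cannot be obtained from linearity of expectation alone --- points the paper glosses over entirely.
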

\begin{proof}
Let $u$ be a node originating a packet $packet_u$ with degree
$d(u)$. For any arbitrary node $v$, the packet $packet_u$ will be
forwarded only if it is the first time to visit $v$ or the counter
$c(x_u)\geq 2$. We know that every packet originated from a node $u$
has a counter given by
\begin{eqnarray}
c(x_u)=\lfloor n/d(u)\rfloor.
\end{eqnarray}
  Let $\mu$ be the mean degree of the graph representing the network $\N$. On average, assuming every packet will be sent to $\mu$ neighboring
  nodes, approximating the mean degree of the graph to the degree of any
arbitrary node $u$, the result follows.
\end{proof}

If the total number of nodes is not known, one can use the method developed in~\cite{aly08e} to estimate $n$. In other words, a random walk initiated by the node $u$ can be run to estimate the total number of nodes.

\begin{lemma}\label{lem:totaltransmissions}
Let $\N$ be an instance model of a wireless sensor network with $n$
sensor nodes. The total number of transmissions required to
disseminate the information from any arbitrary node throughout the
network is $O (n)$.
\end{lemma}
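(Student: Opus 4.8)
The plan is to build directly on Lemma~\ref{lem:onepacket}, which bounds the length of a single dissemination branch, and to obtain the total transmission count by summing over all branches that emanate from the source. The key structural observation is that the source node $u$ begins by flooding its packet to \emph{all} $d(u)$ of its neighbors, so the dissemination splits into $d(u)$ distinct branches, each of which subsequently proceeds as a single-path forwarding governed by the packet counter.

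First I would fix an arbitrary originating node $u$ and recall from the encoding/flooding phase that $u$ assigns the counter $c(x_u)=\lfloor n/d(u)\rfloor$ to its packet and broadcasts it to every node in $\mathcal{N}(u)$, thereby creating $d(u)$ branches. The second step is to argue that each branch performs at most $c(x_u)$ transmissions: in the \textbf{while} loop of DSA-I the counter is decremented on every forwarding step — whether the receiving node accepts the packet (with probability $1/d_c(v)$) or merely relays a duplicate while $c(x_u)\geq 1$ — and the branch terminates once the counter reaches zero or there is no unvisited neighbor to forward to. Hence each branch contributes at most $\lfloor n/d(u)\rfloor$ transmissions, which is precisely the $O(n/\mu)$ per-branch bound of Lemma~\ref{lem:onepacket} under the mean-degree approximation $d(u)\approx\mu$.

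The final step is to multiply: the total number of transmissions is bounded by the number of branches times the per-branch budget, namely $d(u)\cdot\lfloor n/d(u)\rfloor \leq n$. Since the node $u$ was arbitrary, the total number of transmissions required to disseminate information from any node throughout the network is $O(n)$, as claimed.

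I expect the main obstacle to lie not in the counting itself but in the interaction between branches. Because each forwarding hop selects the next node uniformly at random, distinct branches may visit overlapping sets of nodes, and a node that receives a duplicate still consumes one unit of the counter when it relays. I would need to confirm that such overlaps cannot inflate the count beyond the per-branch budget; this holds precisely because the counter is decremented on \emph{every} transmission along a branch, so the cap of $n/d(u)$ per branch is hard and independent of how the branches overlap. The remaining delicate point is making the high-probability qualifier rigorous: ruling out rare degree fluctuations where $d(u)$ deviates significantly from $\mu$, and ensuring the dissemination exhausts its counter budget rather than stalling early because no fresh neighbor remains.
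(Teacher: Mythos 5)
Your proposal is correct and follows essentially the same route as the paper: the paper also multiplies the per-branch hop count $O(n/\mu)$ from Lemma~\ref{lem:onepacket} by the number of branches emanating from the source (its degree, approximated by the mean degree $\mu$), obtaining $\mu\cdot(n/\mu)=O(n)$. Your version is somewhat tighter in that it grounds the per-branch bound in the hard counter cap $d(u)\cdot\lfloor n/d(u)\rfloor\leq n$ and explicitly addresses branch overlap, whereas the paper argues only "on average," but the decomposition and the appeal to Lemma~\ref{lem:onepacket} are the same.
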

\begin{proof}
Let $d(s_i)$ be the degree of a sensor node $s_i$. On average $\mu$ is the mean degree of the set of sensors $S$ approximated by $\frac{1}{n}(\sum_{i}^n d(s_i))$. Every node does flooding that takes $O(1)$ running time to $d(s_i)$ neighbors. In order to disseminate information from a sensor $s_i$, at least $n/\mu$ steps are needed using Lemma~\ref{lem:onepacket}.  Also, every sensor $s_i$ needs to send $\mu$ messages on average to the neighbors. Hence the result follows.
\end{proof}

Note that this is much better than previous results shown in~\cite{aly08e} that take $n\log n$, where $n$ is the number of sources.
\begin{theorem}
The encoding operations of DSA-I algorithm are the total number of
transmissions required to disseminate information sensed by all
nodes that is $O (n^2)$.
\end{theorem}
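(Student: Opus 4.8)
The plan is to reduce the statement to a simple additive counting argument built directly on Lemma~\ref{lem:totaltransmissions}. The quantity to bound is the total number of transmissions incurred when every one of the $n$ nodes disseminates its own sensed packet across the network; by definition this total encoding cost is the sum, over all source nodes $s_i$, of the transmissions spent propagating $packet_{s_i}$ throughout $\N$.

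First I would invoke Lemma~\ref{lem:totaltransmissions}, which establishes that disseminating the data originated at any single arbitrary node costs $O(n)$ transmissions. The key point to stress is that this bound is \emph{uniform} over the choice of source: it does not depend on the particular degree $d(s_i)$, because a low-degree node compensates its small branching factor with a larger counter $c(x_{s_i})=\lfloor n/d(s_i)\rfloor$ (so each branch runs for $O(n/\mu)$ steps by Lemma~\ref{lem:onepacket}), while a high-degree node fans out to more neighbors but with a proportionally smaller counter. In both regimes the product of the number of branches, approximately $\mu$, and the per-branch length $O(n/\mu)$ collapses to $O(n)$, so no source is asymptotically more expensive than another.

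Next I would sum this per-source cost over all $n$ source nodes. Since each of the $n$ nodes floods its packet at a cost of $O(n)$ transmissions, and the overall encoding cost is the aggregate of these $n$ dissemination processes, the grand total is $n\cdot O(n)=O(n^2)$, which is exactly the claimed bound.

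The main obstacle is not the arithmetic but justifying that the per-source bound of Lemma~\ref{lem:totaltransmissions} may legitimately be added across sources without hidden cross terms. I would argue that the counter-based forwarding rule in DSA-I treats each packet independently: the acceptance coin flip $\mathrm{coin}\leq \tfrac{1}{d_c(v)}$ and the subsequent counter decrement act on a single $packet_{s_i}$ at a time, so transmissions attributable to distinct sources are disjoint in the count. Consequently, linearity of expectation lets me add the $n$ expected per-source costs and conclude that the expected total number of encoding transmissions is $O(n^2)$.
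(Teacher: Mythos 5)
Your proposal is correct and follows exactly the argument the paper intends: the paper states this theorem without an explicit proof, leaving it as the immediate consequence of Lemma~\ref{lem:totaltransmissions} ($O(n)$ transmissions per source) summed over all $n$ source nodes, which is precisely your reduction. Your added care about the uniformity of the per-source bound and the absence of cross terms between packets is a sensible strengthening of what the paper leaves implicit, but it does not change the route.
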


\section{DSA-II Algorithm Without Knowing Global Information}\label{sec:algDSA-II}
In algorithm DSA-I we assumed that the total number of nodes are
known in advance for each  sensing/storing node in the network. This
might not be the case since arbitrary  nodes might join and leave
the network at various times due to the fact that they have limited
CPU and short life time. Therefore, one needs to design a network
storage algorithm that does not depend on the value of the total
number of nodes.

We extend DSA-I to obtain a distributed storage algorithm (DSA-II)
that is totally distributed without knowing global information.  The
idea is that  each node $u$ will estimate a value for its counter
$c(u)$, the hop count, without knowing $n$.  In DSA-II each node $u$ will first perform an
inference phase that will calculate  value of the counter $c(u)$.
This can be achieved using the degree of $u$ and the degrees of the
neighboring nodes $\N(u)$. We also assume a  parameter $c_u$ that
will depend on the network condition and node's degree.


\noindent \textbf{Inference Phase:} Let $u$ be an arbitrary node in a distributed network $\N$. In the inference phase, each node $u$ will dynamically determine  value of the counter $c(u)$. The node $u$ knows its neighbors $\N(u)$. This is achieved in the flooding phase. Furthermore, the node $v$ in $\N(u)$ knows the degrees of  its neighbors.

The inference phase is done dynamically in the sense that every node
in the network will independently decide a value for its counter.
Nodes with large degrees will have a high chance of forwarding their
data throughout the network to a large number of nodes.

Let $v$ be a node connected to a source node $u$. Let $b_v$ be the
degree of a node $v$ without adding  nodes in $\N(u) \cup u$. We can
approximate the counter $c(u)$ as
\begin{eqnarray}
c(u)=c_u\Big\lfloor \frac{1}{d(u)} \sum_{v \in \N(u)} b_v \Big \rfloor
\end{eqnarray}
Once the hop counts $c(u)$ is approximated at each node $u$, the encoding operations of DSA-II algorithm are similar to encoding operations of  DSA-I algorithm.
\begin{lemma}\label{lem:totaltransmissions2}
Let $\N$ be a sensor network with $n$ sensor nodes uniformly distributed. The total number of transmissions required to disseminate the information from any arbitrary node throughout the network for the DSA-II is given by
\begin{eqnarray}
O (\mu(\mu-\lambda)),
\end{eqnarray}
where $\lambda$ be the average node density~\cite{Pe03}.
\end{lemma}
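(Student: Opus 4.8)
The plan is to mirror the two-step argument used for DSA-I in Lemmas~\ref{lem:onepacket} and~\ref{lem:totaltransmissions}, but with the DSA-I counter $\lfloor n/d(u)\rfloor$ replaced by the inference-phase counter $c(u)=c_u\lfloor \frac{1}{d(u)}\sum_{v\in\N(u)} b_v\rfloor$. First I would bound the number of hops a single packet travels along one branch (the analogue of Lemma~\ref{lem:onepacket}), which is governed directly by $c(u)$, and then multiply by the average number of neighbors reached at each hop to obtain the total transmission count (the analogue of Lemma~\ref{lem:totaltransmissions}).

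I would begin by unfolding the definition of $b_v$: for a neighbor $v\in\N(u)$, $b_v$ is the degree of $v$ after discarding the nodes that already lie in the closed neighborhood $\N(u)\cup\{u\}$, i.e. $b_v=d(v)-|\N(v)\cap(\N(u)\cup\{u\})|$. Thus $b_v$ counts the genuinely new nodes to which $v$ can forward the packet. Since the network is dense and the nodes are uniformly distributed, I would approximate each degree $d(v)$ by the mean degree $\mu$ and the expected overlap $|\N(v)\cap(\N(u)\cup\{u\})|$ by the average node density $\lambda$ of the random geometric graph~\cite{Pe03}. This yields $b_v\approx\mu-\lambda$, and averaging over the $d(u)\approx\mu$ neighbors gives $\frac{1}{d(u)}\sum_{v\in\N(u)} b_v\approx\mu-\lambda$, so that $c(u)\approx c_u(\mu-\lambda)$.

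With the hop count in hand, I would argue exactly as in Lemma~\ref{lem:totaltransmissions}: a packet is forwarded for $c(u)$ hops along each branch, and at every hop it is broadcast to $\mu$ neighbors on average, so the total number of transmissions is the product $\mu\cdot c(u)\approx c_u\,\mu(\mu-\lambda)$. Treating the system parameter $c_u$ as a constant then delivers the claimed bound $O(\mu(\mu-\lambda))$.

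The step I expect to be the main obstacle is making the overlap approximation rigorous: one must show that, with high probability, the expected size of $\N(v)\cap(\N(u)\cup\{u\})$ for a neighbor $v$ of $u$ concentrates around the density parameter $\lambda$. This is a geometric computation on the intersection of the transmission disks of $u$ and $v$ in the random geometric graph model, and tying it cleanly to the density notion of~\cite{Pe03} (rather than merely to the raw disk-overlap area) is the delicate part; everything else is bookkeeping analogous to the DSA-I case.
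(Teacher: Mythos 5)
The paper states Lemma~\ref{lem:totaltransmissions2} with no proof at all---nothing follows the lemma statement---so there is no authorial argument to compare yours against; your attempt has to be judged on its own terms and against the proof pattern the authors use for DSA-I. On those terms your reconstruction is the natural one and, at the paper's own level of rigor, it succeeds: you unfold $b_v=d(v)-|\N(v)\cap(\N(u)\cup\{u\})|$, approximate $d(v)\approx\mu$ and the expected overlap by $\lambda$, conclude $c(u)\approx c_u(\mu-\lambda)$, and then multiply by $\mu$ transmissions per hop exactly as in Lemma~\ref{lem:totaltransmissions}. This is also consistent with the abstract's claim that DSA-II encoding costs $O(C\mu^2)$ with $C$ a system parameter, which is what your bound becomes once $c_u$ is treated as the constant $C$ and one uses $\mu-\lambda\leq\mu$.

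Two caveats, both of which you essentially flag yourself. First, identifying the expected overlap $|\N(v)\cap(\N(u)\cup\{u\})|$ with the density $\lambda$ is forced on you by the form of the claimed bound, but it is dimensionally shaky: in the random geometric graph model of~\cite{Pe03}, the expected number of nodes in the overlap region is $\lambda$ times the area of the lens-shaped intersection of the two transmission disks (a constant fraction of $\pi r^2$), not $\lambda$ itself, just as $\mu\approx\lambda\pi r^2$ rather than $\mu\approx\lambda$. A rigorous version of that step would therefore give $b_v\approx(1-\alpha)\mu$ for a geometric constant $\alpha$, i.e.\ a bound $O(\mu^2)$, and the stated $O(\mu(\mu-\lambda))$ only emerges under an implicit, unexplained normalization equating the overlap count with $\lambda$. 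That is a defect of the lemma as stated, not of your argument, and you correctly isolate it as the delicate point. Second, treating $c_u$ as $O(1)$ must be stated explicitly, since the paper says it depends on network conditions and node degree; you do state it. Given that the paper's own proofs of Lemmas~\ref{lem:onepacket} and~\ref{lem:totaltransmissions} are ``on average'' heuristics of exactly this kind, your proposal meets---and, by acknowledging where rigor is missing, arguably exceeds---the standard of proof used in this paper.
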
 

\bigskip

\section{Practical Aspects}
In this section we shall provide evaluation and comparison analysis between DSA-I and DSA-II algorithms and related work in distributed storage algorithms. Previous work focused on utilizing random walks and Fountain codes to disseminate data sensed by a set of sensors throughout the network. Also, global and geographical information such as knowing total number of nodes, routing tables, and node locations are used.

 In this work, we  disseminate data throughout the network using data flooding once at every sensor node, then adding some redundancy at other neighboring nodes using random walks and packet trapping. Every storage node will keep track of other node's ID's, from which it will accept/reject packets.

\bigskip

The main advantages of the proposed algorithms are as follows
\begin{compactenum}[i)]
\item
One does not need to query all nodes in the network in order to
retrieve information about all $n$ nodes. Only $\%20-\%30 $ of the
total nodes can be queried.
\item One can query only one arbitrary node $u$ in a certain region in the network to obtain an information about this region.
\item
The DSA-I and DSA-II algorithms proposed in this paper are superior in comparison to the CDSA- and CDSA-II storage algorithms based on Fountain and Raptor codes proposed in~\cite{aly08e,aly08h}. The later utilize random walks to disseminate the information from a set of sources to a set of storage nodes.
\end{compactenum}
%
%

\medskip

The proposed algorithms work also in the case of data update. Assume a node $u$ sensed data $x_u$ and it has been disseminated throughout the network using flooding as shown in DSA-I and DSA-II algorithms. In this case the flag value is set to zero; and a packet from the node $u$ is originated as follows:
\begin{eqnarray}
packet_{u}=(ID_{u},x_{u},c(x_u), flag)
\end{eqnarray}
We notice that every node $v$ stores a copy from this data $x_u$ will also maintain a list of ID's including $ID_u$.
Assume $x_u^{'}$ be the new sensed data from the node $u$. The node $u$ will send update message setting the flag to one.
\begin{eqnarray}
packet_{u}=(ID_{u} ,x_u^{'}\oplus x_{u},c(x_u), flag).
\end{eqnarray}
The new and old data are Xored in this packet.
Every storage node will check the flag, whether it is an update or initial packet. Also, the node $v$ will check if $ID_u$ is in its own list. Once a node $v$ accepts the coming update packet, it will update its target buffer as
\begin{eqnarray}y_v^+=y_v^- \oplus x_u^{'}\oplus x_{u}.\end{eqnarray}

\begin{figure}
\begin{center}
\includegraphics[scale=0.44]{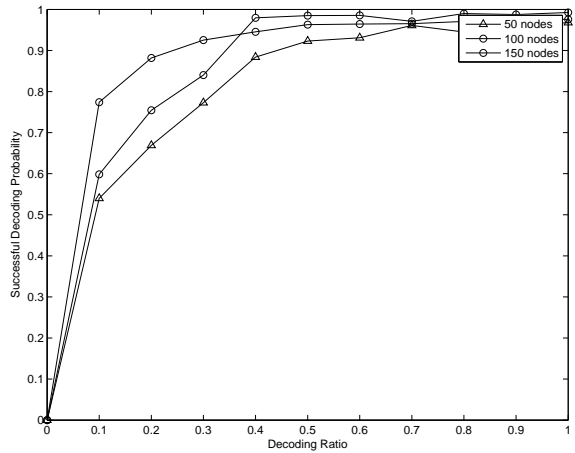}
\caption{A WSN with $n$ nodes arbitrary and randomly distributed in a field. The successful decoding ratio is shown for various values of n=50, 100, 150 with the DSA-I algorithm.}
\label{fig:DSA-I-2X2}
\end{center}
\end{figure}
\bigskip

\section{Performance and Simulation Results}\label{sec:simulation}

\bigskip

In this section we  simulate the distributed storage algorithms, DSA-I,  presented in  Section~\ref{sec:algDSA-I}. The main performance metric we investigate is the successful decoding probability versus the decoding ratio.  We define the successful decoding probability $\rho$ as percentage of $M_s$  successful trials for recovering all $n$ variables (symbols) to the total number of trails. We define $h$ to be the total number of queries needed to recover those $n$ variables. Also, we can define the decoding ratio as the total queried nodes divided by $n$, i.e. $h/n$.

We ran the experiment over a network with area $A = [0, L]^2$ grid and with different node densities. We evaluated the performance with various decoding ratios depending on the total number of nodes inside the network with incremental $\emph{step} = 0.1$.
\medskip

Fig.~\ref{fig:DSA-I-2X2} shows the decoding performance of DSA-I algorithm with Ideal Soliton distribution with small number of nodes.We ran the experiment over a network with area $A = [0, 2]^2$ grid, and evaluated the performance with various decoding ratios $0.1 \leq \eta \leq 1$. From these results we can see that the successful decoding probability increases with the gradual increases of the decoding ratio $\eta$ and reached it upper bound when $\eta = >\%30$.

\medskip

Fig.~\ref{fig:DSA-I-5X5} shows the decoding performance of DSA-I algorithm with Ideal Soliton distribution with large number of nodes.  The network is deployed in $A = [0, 5]^2$. From the simulation results we can see that the decoding ratio increases with the increase of $\lambda$ and approaches to 1 for $\eta > \%20$. Therefore the proposed algorithms perform well for large-scale wireless sensor networks.
\begin{figure}[t]
\begin{center}
\includegraphics[scale=0.44]{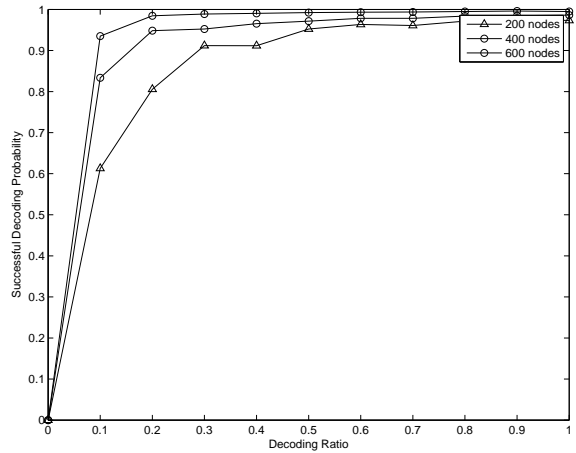}
\caption{A WSN with $n$ nodes arbitrary and randomly distributed in a field. The successful decoding ratio is shown for various values of n= 200, 400, 600 with the DSA-I algorithm.}
\label{fig:DSA-I-5X5}
\end{center}
\end{figure}

\bigskip

\section{Conclusion}\label{sec:conclusion}
We presented two distributed storage algorithms for large-scale wireless sensor networks.
Given $n$ storage/senseing  nodes, we developed schemes to disseminate sensed data throughout the network with a lesser computational overhead. The algorithms' results and performance demonstrated that it is required to query only $\%20-\%30$ of the network nodes in order to retrieve the data collected by the $n$ sensing nodes, when the buffer size is $\%10$ of the network size. Our future work will include practical and implementation aspects of these algorithms.


\scriptsize
\bibliographystyle{plain}

\bibliographystyle{ieeetr}

\end{document}